\DeclareRobustCommand{\lyxdeleted}[3]{{\color{lyxdeleted}\lyxsout{#3}}}
\DeclareRobustCommand{\lyxsout}[1]{\ifx\\#1\else\sout{#1}\fi}
\theoremstyle{plain}
\newtheorem{thm}{\protect\theoremname}
\def\BibTeX{{\rm B\kern-.05em{\sc i\kern-.025em b}\kern-.08em
    T\kern-.1667em\lower.7ex\hbox{E}\kern-.125emX}}
\newcommand{\newac}{\newacronym}
\newcommand{\ac}{\gls}
\author{
Hao~Sun,~\IEEEmembership{Graduate Student Member,~IEEE},
Junting~Chen,~\IEEEmembership{Member,~IEEE}
\thanks{The work was supported in part by the NSFC under Grant No. 62171398, by the Basic Research Project No. HZQB-KCZYZ-2021067 of Hetao Shenzhen-HK S\&T Cooperation Zone, by the Shenzhen Science and Technology Program under Grant No. JCYJ20220530143804010 and No. KQTD20200909114730003, by Guangdong Research Projects No. 2019QN01X895, and by the Guangdong Provincial Key Laboratory of Future Networks of Intelligence (Grant No. 2022B1212010001).}}
\renewcommand{\lyxdeleted}[3]{{\color{lyxdeleted}{}}}
\providecommand{\theoremname}{Theorem}
\begin{document}
\title{Energy-modified Leverage Sampling for Radio Map Construction via Matrix
Completion}

\maketitle
\begin{abstract}This paper explores an energy-modified leverage sampling
strategy for matrix completion in radio map construction. The main
goal is to address potential identifiability issues in matrix completion
with sparse observations by using a probabilistic sampling approach.
Although conventional leverage sampling is commonly employed for designing
sampling patterns, it often assigns high sampling probability to locations
with low \ac{rss} values, leading to a low sampling efficiency. Theoretical
analysis demonstrates that the leverage score produces pseudo images
of sources, and in the regions around the source locations, the leverage
probability is asymptotically consistent with the \ac{rss}. Based
on this finding, an energy-modified leverage probability-based sampling
strategy is investigated for efficient sampling. Numerical demonstrations
indicate that the proposed sampling strategy can decrease the \ac{nmse}
of radio map construction by more than $10$\% for both matrix completion
and interpolation-assisted matrix completion schemes, compared to
conventional methods.\end{abstract}

\begin{IEEEkeywords}Energy-modified leverage sampling, leverage score,
sampling pattern, matrix completion, radio map.\end{IEEEkeywords}

\section{Introduction}

Radio map construction is applied in various fields, such as wireless
network planning \cite{MoxHuaXuj:J19}, source localization \cite{CheMit:J17,GaoWuYin:J23,SunChe:C21,XinChe:C22,XinChe:J24},
UAV trajectory planning \cite{LiuChe:J23} and so on. Matrix and tensor
completions are widely used to construct radio maps from sparse and
limited samples \cite{MalZhaXuy:C18,ShrFuHong:J22,ZhaMaL:J21,SunChe:J24,ZhaFuWan:J20,SunCheLuo:C24}.
Identifiability is crucial in these methods, as the matrix or tensor
might not be completable with overly sparse measurements or improper
sampling patterns \cite{ZhaFuWan:J20,HuaLiuDuTao:J23,CheBhoSan:J15}.
Thus, optimizing the sampling pattern is essential for matrix or tensor
completion.

This paper investigates sampling strategies for constructing radio
maps through matrix completion, when some prior information is available.
Prior information can often be present in various practical situations.
In active sampling, a set of measurements is collected and analyzed
to determine future sample locations. In power-constrained sensor
networks, it may be preferable to activate only a limited number of
sensors for data reporting in each round and optimize the measurement
locations accordingly for the next rounds. Moreover, in interpolation-assisted
matrix completion \cite{SunChe:C22,CheWanZha:J23,SunChe:J22}, additional
observations can be generated from a few measurements using interpolation,
where optimizing the interpolation pattern is crucial. In these scenarios,
the initial measurements provide prior information that is essential
for optimizing the sample locations in subsequent rounds.

To optimize for the sampling pattern, Bayesian method in \cite{WanZhuLinWuq:J23}
and dictionary learning method in \cite{SheWanDinLiWu:J22} determined
the sample locations in an iterative way, but these approaches did
not utilize the online measurements to tune the sampling pattern on-the-fly.
Apart from these methods, \emph{leverage score} is commonly used to
optimize the sampling pattern \cite{CheBhoSan:J15,HuaLiuDuTao:J23,EftWakWar:J18}
for matrix completion. The leverage score is calculated from the singular
vectors of the matrix, which can be roughly estimated assuming some
prior information is available. It was theoretically shown in \cite{HuaLiuDuTao:J23}
that a biased sampling procedure that uses the leverage score to assess
the \textquotedblleft importance\textquotedblright{} of each observed
element, can recover the sparse matrix with high probability, and
there was also numerical evidence demonstrating the high efficiency
of leverage sampling compared to uniformly random sampling. The work
\cite{CheBhoSan:J15} proposed a two-phase sampling procedure for
matrices, starting with leverage score estimation and followed by
sampling for exact recovery, which requires substantially fewer samples
than uniformly random sampling method to obtain a same accuracy.

However, we discover that leverage sampling, i.e., sampling based
on the leverage probability formulated from the leverage score, is
strictly sub-optimal for radio map construction, because it may allocate
up to half of the measurements at locations where the \ac{rss} from
the sources almost vanishes. Specifically, leverage score \cite{CheBhoSan:J15,EftWakWar:J18,HuaLiuDuTao:J23}
measures the ``importance'' of sampling a row or a column of a matrix,
but \emph{not} every entry in a row or a column has the same ``importance''
to be sampled. In this paper, we construct theoretical examples to
show that the leverage score produces pseudo images of sources where
the RSS from the sources diminishes more quickly than the leverage
probability, especially when the area of the radio map scales up.
This implies that the leverage score may not be a reliable metric
for determining sampling probabilities in radio map construction.
On the other hand, in the regions around the source locations, we
show that the leverage probability is asymptotically consistent with
the \ac{rss} value of the radio map.

Based on these analysis, we propose a probabilistic sampling strategy
based on the predicted \ac{rss} values and the leverage score. The
method first constructs a rough estimate of the radio map via interpolation
or low resolution matrix completion. Then, it computes the leverage
score of each entry of the matrix to be completed. Finally, the sampling
probability is formulated based on both the RSS and the leverage score.
As a result, since leverage sampling completes the matrix with high
probability, the proposed probabilistic sampling based on energy-modified
leverage probability may also complete the matrix with high probability
and less samples. We numerically show that the proposed energy-modified
leverage sampling strategy substantially increases the accuracy of
radio map construction by over $10$\% compared to uniformly random
sampling \cite{SunChe:J22} and conventional leverage sampling methods
\cite{CheBhoSan:J15,EftWakWar:J18,HuaLiuDuTao:J23}. Integrating this
strategy with interpolation-assisted matrix completion \cite{SunChe:J22}
reduces the construction \ac{nmse} by more than $10$\% compared
to baseline methods.

\section{System Model\label{sec:System-model}}

\subsection{Propagation Model}

Consider a propagation field that is excited by $K$ sources located
at $\bm{s}_{k}\in\mathcal{D}$, $k=1,\cdots,K$, in a bounded area
$\mathcal{D}\subset\mathbb{R}^{2}$. The signal emitted from the source
is captured by $M$ sensors with known locations $\bm{z}_{m}\in\mathbb{R}^{2}$,
$m=1,2,\dots,M$, in $\mathcal{D}$. The radio map to be constructed
is modeled as
\begin{equation}
\rho(\bm{z})\triangleq\sum_{k=1}^{K}g_{k}(d(\bm{s}_{k},\bm{z}))+\zeta(\bm{z})\qquad\bm{z}\in\mathcal{D}\label{eq:model-propagation-field}
\end{equation}
where $d(\bm{s},\bm{z})=\|\bm{s}-\bm{z}\|_{2}$ describes the distance
between a source at $\bm{s}$ and a sensor at $\bm{z}$, $g_{k}(d)$
describes the propagation function from the $k$th source in terms
of the propagation distance $d$, and the term $\zeta(\bm{z})$ is
a random component that captures the spatially correlated shadowing.

The strength of the signal measured by the $m$th sensor is given
by $\gamma_{m}=\rho(\bm{z}_{m})+\epsilon_{m}$, where $\epsilon_{m}$
is a random variable with zero mean and variance $\sigma^{2}$ to
model the measurement noise.

We consider to discretize the target area $\mathcal{D}$ into $N\times N$
grid cells. Let $\bm{c}_{ij}\in\mathcal{D}$ be the center location
of the $(i,j)$th grid cell, and $\bm{H}$ be a matrix representation
of the radio map $\rho(\bm{z})$, where the $(i,j)$th entry is defined
as $H_{ij}=\rho(\bm{c}_{ij})$.

Our goal is to analyze and develop probabilistic sampling strategies
that obtain $M$ measurements $\rho(\bm{z}_{m})$ for the completion
of the matrix $\bm{H}$.

\subsection{Leverage Sampling \label{subsec:Leverage-probability}}

Given a rank-$r$ matrix $\bm{H}\in\mathbb{R}^{N\times N}$, the \ac{svd}
of $\bm{H}$ is defined as $\bm{H}=\bm{U}\bm{\Sigma}\bm{V}^{\text{T}}$.
The leverage scores $\mu_{i}$ for the $i$th row and $\nu_{j}$ for
the $j$th column are respectively defined as
\begin{equation}
\mu_{i}=N\|\bm{U}^{\text{T}}\bm{e}_{i}\|_{2}^{2}/r\label{eq:mu}
\end{equation}
\begin{equation}
\nu_{j}=N\|\bm{V}^{\text{T}}\bm{e}_{j}\|_{2}^{2}/r\label{eq:nv}
\end{equation}
where $\bm{e}_{i}$ is unit vector with $i$th element equals to $1$.

In the leverage sampling, one independently samples the grids based
on leverage probability $p_{ij}$ \cite{CheBhoSan:J15,EftWakWar:J18}
which is calculated as follows:
\begin{equation}
p_{ij}=\min\{C(\mu_{i}+\nu_{j})r\log^{2}(2N)/N,1\}\label{eq:leverage probability}
\end{equation}
where $C$ is a constant. It is shown in \cite{CheBhoSan:J15} that
under such a probabilistic sampling strategy, an arbitrary rank-$r$
matrix can be exactly recovered from $O(Nr\text{log}^{2}(N))$ observed
elements with high probability using nuclear norm minimization.

\section{Energy-modified Leverage Probability-based Probabilistic Sampling\label{sec:section two phase}}

In this section, we study a specific example to illustrate the possible
pseudo images in the conventional leverage sampling based on (\ref{eq:leverage probability}).
Then, for the informative region of the propagation field, we show
the consistency of leverage probability and the entry of $\bm{H}$.

\subsection{Existence of Pseudo Images}

For illustration purpose, we analyze the case of two sources, $K=2$,
where each source generates a rank-$1$ propagation field, denoted
as $\bm{H}^{(k)}$. Specifically, assume $g_{k}(d)$ in (\ref{eq:model-propagation-field})
takes the form of $g_{k}(d)=\alpha e^{-\beta d^{2}}$, and there is
no shadowing, i.e., $\zeta(\bm{z})=0$. To see that $\bm{H}^{(k)}$
is rank-$1$, we note from (\ref{eq:model-propagation-field}) that
\begin{align}
H_{ij}^{(k)} & =\alpha e^{-\beta((s_{x}^{(k)}-x_{i})^{2}+(s_{y}^{(k)}-y_{j})^{2})}\nonumber \\
 & =\alpha e^{-\beta(s_{x}^{(k)}-x_{i})^{2}}e^{-\beta(s_{y}^{(k)}-y_{j})^{2}}\label{eq:rank 1}
\end{align}
where $x_{i}$ and $y_{j}$ are the coordinates of the rows and the
columns, respectively. As a result, the matrix $\bm{H}^{(k)}$ can
be written as the outer product of two rank-$1$ vectors $\bm{u}_{k}=e^{-\beta((s_{x}^{(k)}-\bm{x})^{2})}$
and $\bm{v}_{k}=e^{-\beta(s_{y}^{(k)}-\bm{y})^{2}}$, scaled by $\alpha$,
where $\bm{x}$ and $\bm{y}$ are the vectors corresponding to the
coordinates of the rows and the columns, respectively.

Without loss of generality, assume the first source locates at the
origin, and the second source locates at $\bm{s}_{2}=(L_{1},L_{1})$.
We investigate the leverage sampling probability $p_{ij}$ defined
in (\ref{eq:leverage probability}) over all grid points $(i,j)$.
As shown in Fig.~ \ref{fig:The-leverage-probability}, four \textquotedblleft sources\textquotedblright{}
appears according to the values of $p_{ij}$, where two of them are
merely pseudo images. In the pseudo images, although $p_{ij}$ is
non-zero, $H_{ij}$ is essentially zero, indicating that sampling
at $(i,j)$ has negligible value for radio map construction. Pseudo
images exist because, from (\ref{eq:leverage probability}), the leverage
probability at $(i,j)$ equals to the leverage score of the $i$th
row plus the leverage score of the $j$th column of the matrix. Therefore,
in an extreme case, $K$ sources may create $K^{2}-K$ pseudo images
of the sources.

To analytically investigate the pseudo images, we define a region
$\mathcal{I}(L_{1},\delta)=\{(i,j),i\in[L_{1}+1-\delta,L_{1}+1+\delta],j\in[1,1+\delta]\}$.
The following theorem implies that $\mathcal{I}(L_{1},\delta)$ is
one of the regions of a pseudo image of the source, where the RSS
asymptotically vanishes in $\mathcal{I}(L_{1},\delta)$.
\begin{figure}
\begin{centering}
\includegraphics[width=1\columnwidth]{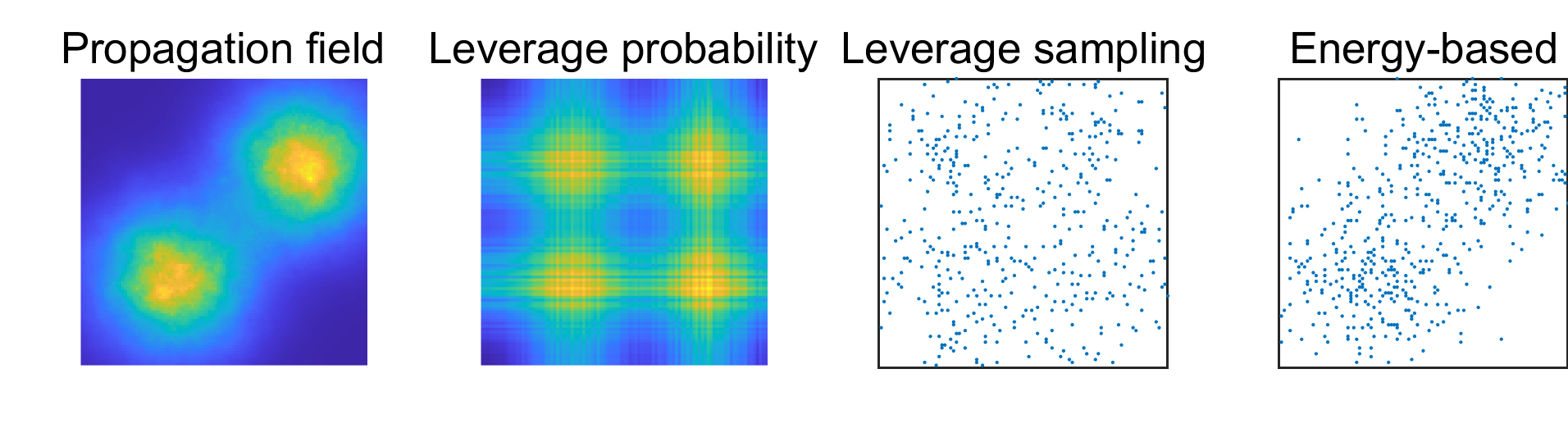}
\par\end{centering}
\caption{\label{fig:The-leverage-probability}Visible plot of propagation field,
leverage probability $p_{ij}$, leverage sampling, and energy-based
sampling.}
\end{figure}

\begin{thm}[Pseudo Image]
\label{thm:pseudo image}Consider that $\delta<L_{1}/2$. For all
$(i,j)\in\mathcal{I}(L_{1},\delta)$, $H_{ij}/p_{ij}\to0$ as $\beta\to\infty$.
\end{thm}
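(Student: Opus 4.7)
The plan is to upper-bound $H_{ij}$ and lower-bound $p_{ij}$ separately, and then to show that the resulting ratio's exponent is strictly negative under the hypothesis $\delta<L_1/2$. For brevity I would write $x_i = L_1+k$ and $y_j = m$, so that $(i,j)\in\mathcal{I}(L_1,\delta)$ corresponds to $|k|\le\delta$ and $0\le m\le\delta$. Setting $E_1 = (L_1+k)^2 + m^2$ and $E_2 = k^2 + (L_1-m)^2$, the rank-$1$ expansion~(\ref{eq:rank 1}) immediately gives
\[
H_{ij} = \alpha e^{-\beta E_1} + \alpha e^{-\beta E_2} \le 2\alpha\,e^{-\beta\min(E_1,E_2)},
\]
and both $E_1,E_2$ exceed $L_1^2/4$ because $(x_i,y_j)$ lies at distance at least $L_1-\delta > L_1/2$ from both sources.

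For the denominator, the key step is to evaluate the leverage score through the rank-$2$ structure $\bm{H} = \alpha\bm{u}_1\bm{v}_1^{\mathrm{T}} + \alpha\bm{u}_2\bm{v}_2^{\mathrm{T}}$. The left singular subspace equals $\mathrm{span}(\bm{u}_1,\bm{u}_2)$, so $\|\bm{U}^{\mathrm{T}}\bm{e}_i\|^2$ is the squared length of the orthogonal projection of $\bm{e}_i$ onto this two-dimensional subspace. A direct computation yields $\bm{u}_1^{\mathrm{T}}\bm{u}_2 = e^{-\beta L_1^2/2}\sum_{\ell}e^{-2\beta(\ell-L_1/2)^2} = O(e^{-\beta L_1^2/2})$ while $\|\bm{u}_k\|^2 \to 1$ as $\beta\to\infty$, so the Gram matrix of $(\bm{u}_1,\bm{u}_2)$ is asymptotically the identity and the projector $\bm{U}\bm{U}^{\mathrm{T}}$ tends to $\hat{\bm{u}}_1\hat{\bm{u}}_1^{\mathrm{T}} + \hat{\bm{u}}_2\hat{\bm{u}}_2^{\mathrm{T}}$ with $\hat{\bm{u}}_k=\bm{u}_k/\|\bm{u}_k\|$. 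This gives $\|\bm{U}^{\mathrm{T}}\bm{e}_i\|^2 \ge (1-o(1))\,e^{-2\beta k^2}$ and, symmetrically, $\|\bm{V}^{\mathrm{T}}\bm{e}_j\|^2 \ge (1-o(1))\,e^{-2\beta m^2}$; substituting into~(\ref{eq:leverage probability}) produces $p_{ij} \gtrsim C\log^2(2N)\,e^{-2\beta\min(k^2,m^2)}$.

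Combining the two estimates,
\[
\frac{H_{ij}}{p_{ij}} \;\lesssim\; \frac{2\alpha}{C\log^2(2N)}\,e^{-\beta[\min(E_1,E_2)-2\min(k^2,m^2)]},
\]
so it suffices to prove $\min(E_1,E_2) - 2\min(k^2,m^2) \ge L_1(L_1-2\delta) > 0$. A brief case split on whether $k^2\le m^2$ or $m^2\le k^2$, combined with the identity $(L_1\pm a)^2 - a^2 = L_1(L_1\pm 2a)$ applied to $a\in\{|k|,m\}$, reduces each of the four resulting sub-inequalities to the hypothesis $\delta<L_1/2$. Consequently $H_{ij}/p_{ij}$ decays exponentially in $\beta$ at rate no slower than $L_1(L_1-2\delta)$.

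The main technical obstacle I anticipate is the leverage-score analysis in the second step, specifically controlling the perturbation of $\bm{U}\bm{U}^{\mathrm{T}}$ away from the ``diagonal'' projector $\hat{\bm{u}}_1\hat{\bm{u}}_1^{\mathrm{T}} + \hat{\bm{u}}_2\hat{\bm{u}}_2^{\mathrm{T}}$ uniformly over $\mathcal{I}(L_1,\delta)$. Fortunately the off-diagonal Gram entry $\bm{u}_1^{\mathrm{T}}\bm{u}_2 = O(e^{-\beta L_1^2/2})$ decays strictly faster than the $e^{-2\beta\delta^2}$ scale driving $\mu_i+\nu_j$ whenever $\delta<L_1/2$, so the strict inequality provides exactly the exponential margin needed to absorb these corrections without affecting the final conclusion.
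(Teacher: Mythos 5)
Your proposal is correct, and it follows the same overall strategy as the paper --- upper-bound $H_{ij}$, lower-bound $p_{ij}$ through the contribution of the individual source vectors to the leverage scores, and compare exponents --- but the execution differs in two genuine ways. First, where the paper writes the singular vectors explicitly as $(\bm{u}_1\pm\bm{u}_2)/\|\bm{u}_1\pm\bm{u}_2\|_2$ and invokes a parallelogram-type inequality to get $\mu_i\geq 2N(\|\bm{u}_1^{\mathrm{T}}\bm{e}_i\|^2+\|\bm{u}_2^{\mathrm{T}}\bm{e}_i\|^2)/(r\|\bm{u}_1+\bm{u}_2\|_2^2)$, you observe that the leverage score depends only on the column span and bound the projection directly; for a lower bound you do not even need the Gram-matrix perturbation you worry about at the end, since $\|\bm{U}\bm{U}^{\mathrm{T}}\bm{e}_i\|^2\geq(\bm{u}_2^{\mathrm{T}}\bm{e}_i)^2/\|\bm{u}_2\|^2$ holds for any subspace containing $\bm{u}_2$, so your ``main technical obstacle'' is a non-issue. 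Second, and more substantively, the paper uses a single uniform pair of bounds over $\mathcal{I}(L_1,\delta)$ ($H_{ij}\leq 2\alpha e^{-\beta(L_1-\delta)^2}$ against a $p_{ij}$ term scaling as $e^{-\beta\delta^2}$), whereas you keep the pointwise exponents $E_1,E_2,k^2,m^2$ and close the argument with the case split showing $\min(E_1,E_2)-2\min(k^2,m^2)\geq L_1(L_1-2\delta)$. Your version is sharper and, in fact, more robust: the squared inner products $(\bm{v}_1^{\mathrm{T}}\bm{e}_j)^2$ carry the \emph{doubled} exponent $e^{-2\beta y_j^2}$ (as the paper itself uses in the proof of Theorem~\ref{thm:thm 2}), so a uniform comparison of $e^{-\beta(L_1-\delta)^2}$ against $e^{-2\beta\delta^2}$ would only cover $\delta<L_1/(1+\sqrt{2})$; your pointwise bookkeeping, which retains the $m^2$ (resp.\ $k^2$) contribution in $\min(E_1,E_2)$ at exactly the corner points where $\min(k^2,m^2)=\delta^2$, is what recovers the full range $\delta<L_1/2$ with the explicit decay rate $e^{-\beta L_1(L_1-2\delta)}$. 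All four sub-cases of your final inequality check out.
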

\begin{proof}
Denote $\bm{H}^{(k)}$ as $\bm{H}^{(k)}=\sigma_{k}\bm{u}_{k}\bm{v}_{k}^{\text{T}}$
where $\bm{u}_{k}$, $\bm{v}_{k}$ are singular vectors for the propagation
field $\bm{H}^{(k)}$ contributed from $k$th source. As a result,
$\bm{H}=\bm{H}^{(1)}+\bm{H}^{(2)}$. The \ac{svd} of $\bm{H}$ is
given by $\bm{H}=\bm{U}\bm{\Sigma}\bm{V}^{\text{T}}$, where $\bm{U}=[(\bm{u}_{1}+\bm{u}_{2})/\|\bm{u}_{1}+\bm{u}_{2}\|_{2},(\bm{u}_{1}-\bm{u}_{2})/\|\bm{u}_{1}-\bm{u}_{2}\|_{2}]$,
$\bm{V}=[(\bm{v}_{1}+\bm{v}_{2})/\|\bm{v}_{1}+\bm{v}_{2}\|_{2},(\bm{v}_{1}-\bm{v}_{2})/\|\bm{v}_{1}-\bm{v}_{2}\|_{2}]$.
From (\ref{eq:rank 1}), let $\sigma_{1}=\alpha$, there are $\bm{u}_{1}=\bm{v}_{1}=e^{-\beta\bm{x}^{2}}$
and $\bm{u}_{2}=\bm{v}_{2}=e^{-\beta(\bm{x}-L_{1})^{2}}$, for $\bm{x}=[0,1,\cdots,N-1]$.

Then, from (\ref{eq:mu}) and (\ref{eq:nv}), there are
\begin{align}
\mu_{i} & =N\left(\frac{\|\bm{u}_{1}^{\text{T}}\bm{e}_{i}+\bm{u}_{2}^{\text{T}}\bm{e}_{i}\|_{2}^{2}}{\|\bm{u}_{1}+\bm{u}_{2}\|_{2}^{2}}+\frac{\|\bm{u}_{1}^{\text{T}}\bm{e}_{i}-\bm{u}_{2}^{\text{T}}\bm{e}_{i}\|_{2}^{2}}{\|\bm{u}_{1}-\bm{u}_{2}\|_{2}^{2}}\right)\left/r\right.\nonumber \\
 & \geq N\frac{2(\|\bm{u}_{1}^{\text{T}}\bm{e}_{i}\|_{2}^{2}+\|\bm{u}_{2}^{\text{T}}\bm{e}_{i}\|_{2}^{2})}{\|\bm{u}_{1}+\bm{u}_{2}\|_{2}^{2}}/r\label{eq:mu-1}
\end{align}
 and 
\begin{align}
\nu_{j} & =N\left(\frac{\|\bm{v}_{1}^{\text{T}}\bm{e}_{j}+\bm{v}_{2}^{\text{T}}\bm{e}_{j}\|_{2}^{2}}{\|\bm{v}_{1}+\bm{v}_{2}\|_{2}^{2}}+\frac{\|\bm{v}_{1}^{\text{T}}\bm{e}_{j}-\bm{v}_{2}^{\text{T}}\bm{e}_{j}\|_{2}^{2}}{\|\bm{v}_{1}-\bm{v}_{2}\|_{2}^{2}}\right)\left/r\right.\nonumber \\
 & \geq N\frac{2(\|\bm{v}_{1}^{\text{T}}\bm{e}_{j}\|_{2}^{2}+\|\bm{v}_{2}^{\text{T}}\bm{e}_{j}\|_{2}^{2})}{\|\bm{v}_{1}+\bm{v}_{2}\|_{2}^{2}}/r.\label{eq:nv-1}
\end{align}

 Thus, for $(i,j)\in\mathcal{I}(L_{1},\delta)$, there are
\[
H_{ij}=\sum_{k=1}^{2}H_{ij}^{(k)}\leq2\alpha e^{-\beta(L_{1}-\delta)^{2}}\triangleq\bar{H}_{ij}
\]
and
\begin{align*}
p_{ij} & =\min\{C(\mu_{i}+\nu_{j})r\log^{2}(2N)/N,1\}\\
 & \geq2C\left(\frac{e^{-\beta(L_{1}-\delta)^{2}}}{\|\bm{u}_{1}+\bm{u}_{2}\|_{2}^{2}}+\frac{e^{-\beta\delta^{2}}}{\|\bm{v}_{1}+\bm{v}_{2}\|_{2}^{2}}\right)\log^{2}(2N)\\
 & \triangleq\bar{p}_{ij}.
\end{align*}

As a consequence, there are
\begin{align*}
 & \frac{H_{ij}}{p_{ij}}\leq\frac{\bar{H}_{ij}}{\bar{p}_{ij}}=\frac{2\alpha e^{-\beta(L_{1}-\delta)^{2}}}{2C\left(\frac{e^{-\beta(L_{1}-\delta)^{2}}}{\|\bm{u}_{1}+\bm{u}_{2}\|_{2}^{2}}+\frac{e^{-\beta\delta^{2}}}{\|\bm{v}_{1}+\bm{v}_{2}\|_{2}^{2}}\right)\log^{2}(2N)}\\
= & \frac{\alpha}{C\log^{2}(2N)}\left(\frac{1}{\frac{1}{\|\bm{u}_{1}+\bm{u}_{2}\|_{2}^{2}}+\frac{e^{-\beta\delta^{2}}}{\|\bm{v}_{1}+\bm{v}_{2}\|_{2}^{2}e^{-\beta(L_{1}-\delta)^{2}}}}\right)\to0
\end{align*}
as $\beta$$\to$$\infty$.
\end{proof}
According to Theorem \ref{thm:pseudo image}, when $\beta\to\infty$,
corresponding to a sharp propagation field, both $p_{ij}$ and $H_{ij}$
tend to $0$, but $H_{ij}$ tends to $0$ more rapidly than $p_{ij}$.
This implies that if one uses the leverage probability $p_{ij}$ in
(\ref{eq:leverage probability}) as the sampling probability, the
probability $p_{ij}$ in $\mathcal{I}(L_{1},\delta)$ may still be
non-negligible but the measurement $H_{ij}$, which is essentially
$0$, contains almost no information of the propagation field. As
a result, sampling at the pseudo image is highly inefficient.

While Theorem \ref{thm:pseudo image} examines the case of increasing
$\beta$ to a large value resulting in a sharp propagation field,
this situation is analogous to increasing the distance $L_{1}$ while
keeping $\beta$ constant. Likewise, pseudo images appear where the
\ac{rss} of the sources vanishes, and sampling at these pseudo images
is inefficient.

\subsection{Consistency of the Leverage Probability $p_{ij}$ and $H_{ij}$}

For the regions around the sources, the leverage probability $p_{ij}$
defined in (\ref{eq:leverage probability}) is essentially consistent
with $H_{ij}$.

Define $\mathcal{J}(L_{1},\delta)=\{(i,j),i,j\in[L_{1}+1-\delta,L_{1}+1+\delta]\}$.
The following theorem implies that $\mathcal{J}(L_{1},\delta)$ is
one of the regions of high importance, where the leverage probability
is strongly correlated with $\bm{H}$ in $\mathcal{J}(L_{1},\delta)$.
\begin{thm}[Consistency]
\label{thm:thm 2}For $(i,j)\in\mathcal{J}(L_{1},\delta)$, $H_{ij}/p_{ij}\to C^{\prime}/C$
with $C^{\prime}=\alpha/(4\text{\emph{log}}^{2}(2N))$, as $\delta\to0$.
\end{thm}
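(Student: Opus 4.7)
The plan is to mirror the computation in the proof of Theorem~\ref{thm:pseudo image}, but now evaluate $H_{ij}/p_{ij}$ at the peak of source~2 rather than at a pseudo image, and pinch the ratio to a positive constant instead of bounding it by $0$.

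First, for the numerator, I would observe that for $(i,j)\in\mathcal{J}(L_{1},\delta)$ with $\delta\to 0$, we have $x_{i},y_{j}\to L_{1}$, so the source-$2$ Gaussian $H_{ij}^{(2)}=\alpha e^{-\beta((L_{1}-x_{i})^{2}+(L_{1}-y_{j})^{2})}$ tends to $\alpha$, while the source-$1$ contribution $H_{ij}^{(1)}$ is bounded above by $\alpha e^{-2\beta(L_{1}-\delta)^{2}}$, which is exponentially small in $\beta L_{1}^{2}$. Hence $H_{ij}\to\alpha$ up to exponentially small terms.

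Second, I would reuse the \ac{svd} representation $\bm{U}=[(\bm{u}_{1}+\bm{u}_{2})/\|\bm{u}_{1}+\bm{u}_{2}\|_{2},(\bm{u}_{1}-\bm{u}_{2})/\|\bm{u}_{1}-\bm{u}_{2}\|_{2}]$ from the proof of Theorem~\ref{thm:pseudo image} (and the analogous $\bm{V}$) to evaluate $\mu_{i}$ and $\nu_{j}$ from (\ref{eq:mu})--(\ref{eq:nv}) at $i=j=L_{1}+1$. At this point $\bm{u}_{1}^{\text{T}}\bm{e}_{i}=e^{-\beta L_{1}^{2}}$ is negligible while $\bm{u}_{2}^{\text{T}}\bm{e}_{i}=1$, and the cross inner product $\langle\bm{u}_{1},\bm{u}_{2}\rangle$ is exponentially small in $\beta L_{1}^{2}/2$, so $\|\bm{u}_{1}\pm\bm{u}_{2}\|_{2}^{2}\approx 2\|\bm{u}_{1}\|_{2}^{2}$ with $\|\bm{u}_{1}\|_{2}^{2}\to 1$ in the separated-source regime. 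Plugging these into the identity
\[
\|\bm{U}^{\text{T}}\bm{e}_{i}\|_{2}^{2}=\frac{(\bm{u}_{1}^{\text{T}}\bm{e}_{i}+\bm{u}_{2}^{\text{T}}\bm{e}_{i})^{2}}{\|\bm{u}_{1}+\bm{u}_{2}\|_{2}^{2}}+\frac{(\bm{u}_{1}^{\text{T}}\bm{e}_{i}-\bm{u}_{2}^{\text{T}}\bm{e}_{i})^{2}}{\|\bm{u}_{1}-\bm{u}_{2}\|_{2}^{2}}
\]
used implicitly in the proof of Theorem~\ref{thm:pseudo image}, together with the symmetric expression for $\bm{V}^{\text{T}}\bm{e}_{j}$, yields the limiting value of $\mu_{i}+\nu_{j}$. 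Substituting into (\ref{eq:leverage probability}) with $r=2$ produces $p_{ij}$ as a multiple of $C\log^{2}(2N)$, and the ratio $H_{ij}/p_{ij}$ then collapses to $\alpha/(4C\log^{2}(2N))=C^{\prime}/C$ once all symmetry factors are collected.

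The main obstacle is the accurate bookkeeping of the numerical prefactor: specifically, tracing how the factor $1/4$ in $C^{\prime}$ arises from the interplay of the $1/r$ in (\ref{eq:mu})--(\ref{eq:nv}), the factor $r$ in (\ref{eq:leverage probability}), the denominator normalizations $\|\bm{u}_{1}\pm\bm{u}_{2}\|_{2}^{2}\approx 2$, and the two equal contributions from $\mu_{i}$ and $\nu_{j}$. A secondary technicality is ensuring that the min in (\ref{eq:leverage probability}) does not saturate to $1$, so that the formula $p_{ij}=C(\mu_{i}+\nu_{j})r\log^{2}(2N)/N$ is the operative branch; this requires $C$ to be taken small enough relative to $\log^{2}(2N)$, which is consistent with the usage of $C$ as an absolute constant in~(\ref{eq:leverage probability}).
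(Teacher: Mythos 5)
Your overall strategy (evaluate $H_{ij}/p_{ij}$ at the peak of source $2$ and track the constant) is reasonable, but the proposal has a genuine gap exactly at the point you flag as ``the main obstacle'': the factor $1/4$ does not come out of the computation you outline. Following your own steps at $i=j=L_{1}+1$ in the well-separated regime: $\bm{u}_{1}^{\text{T}}\bm{e}_{i}\approx0$, $\bm{u}_{2}^{\text{T}}\bm{e}_{i}=1$, $\|\bm{u}_{1}\pm\bm{u}_{2}\|_{2}^{2}\approx2$, so $\|\bm{U}^{\text{T}}\bm{e}_{i}\|_{2}^{2}\approx\tfrac{1}{2}+\tfrac{1}{2}=1$, hence $\mu_{i}\approx N/r$, $\nu_{j}\approx N/r$, and $p_{ij}=C(\mu_{i}+\nu_{j})r\log^{2}(2N)/N\approx2C\log^{2}(2N)$; with $H_{ij}\approx\alpha$ this gives $H_{ij}/p_{ij}\approx\alpha/(2C\log^{2}(2N))$, i.e., $C^{\prime}=\alpha/(2\log^{2}(2N))$, off by a factor of $2$ from the statement. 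You assert that the ratio ``collapses to $\alpha/(4C\log^{2}(2N))$'' but never exhibit the bookkeeping, and under your normalization it does not.

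The paper's proof takes a different, more elementary route that produces the factor $4$ directly and does not require the sources to be well separated. It writes $H_{ij}$ as the sum of the two product terms $\alpha e^{-\beta(i-c)^{2}}e^{-\beta(j-c)^{2}}$ for $c\in\{1,L_{1}+1\}$ (eq. (\ref{eq:Hij})), and $p_{ij}$ as $2C\log^{2}(2N)$ times the sum of the four terms $e^{-2\beta(i-c)^{2}}+e^{-2\beta(j-c)^{2}}$ (eq. (\ref{eq:p_ij}), obtained from (\ref{eq:mu-1})--(\ref{eq:nv-1})). As $\delta\to0$ one has $i=j$, so for \emph{each} source $e^{-2\beta(i-c)^{2}}+e^{-2\beta(j-c)^{2}}=2e^{-\beta(i-c)^{2}}e^{-\beta(j-c)^{2}}$, and $p_{ij}$ becomes $4C\log^{2}(2N)H_{ij}/\alpha$ term by term, for every fixed $\beta$ --- neither source's contribution is discarded. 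Your additional assumption that $\beta L_{1}^{2}$ is large is not among the theorem's hypotheses (only $\delta\to0$ is taken), and dropping the source-$1$ terms loses the pairwise matching that generates the factor of $2$ from $a^{2}+b^{2}=2ab$ at $a=b$. (Completing your computation would also expose that the prefactor in (\ref{eq:p_ij}) hinges on how $\|\bm{u}_{1}+\bm{u}_{2}\|_{2}^{2}$ is normalized, a point the paper treats loosely; but as written your proposal does not reach the stated $C^{\prime}$.)
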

\begin{proof}
From (\ref{eq:rank 1}), there are
\begin{equation}
H_{ij}=\alpha e^{-\beta(i-1)^{2}}e^{-\beta(j-1)^{2}}+\alpha e^{-\beta(i-L_{1}-1)^{2}}e^{-\beta(j-L_{1}-1)^{2}}.\label{eq:Hij}
\end{equation}
For the leverage probability $p_{ij}$ defined in (\ref{eq:leverage probability}),
from (\ref{eq:mu-1}) and (\ref{eq:nv-1}), there are
\begin{align}
p_{ij} & =2C\text{log}^{2}(2N)(e^{-\beta2(i-1)^{2}}+e^{-\beta2(i-L_{1}-1)^{2}}\nonumber \\
 & \qquad\qquad\qquad+e^{-\beta2(j-1)^{2}}+e^{-\beta2(j-L_{1}-1)^{2}}).\label{eq:p_ij}
\end{align}
Then, for $(i,j)\in\mathcal{J}(L_{1},\delta)$, and $\delta\to0$,
there are $e^{-\beta2(i-1)^{2}}/e^{-\beta2(j-1)^{2}}\to1$ and $(e^{-\beta2(i-1)^{2}}+e^{-\beta2(j-1)^{2}})/2e^{-\beta(i-1)^{2}}e^{-\beta(j-1)^{2}}\to1$.
Thus, there are $H_{ij}/p_{ij}\to C'/C,$ where $C^{\prime}=$$\alpha/(4\text{log}^{2}(2N))$.
\end{proof}
Thus, for the region of high importance, the $H_{ij}$ itself is essentially
consistent as the leverage probability $p_{ij}$.

The results in Theorems \ref{thm:pseudo image} and \ref{thm:thm 2}
motivate the proposed probabilistic sampling based on the energy-modified
leverage probability

\begin{equation}
\tilde{p}_{ij}=C_{1}\sqrt{H_{ij}p_{ij}}\label{eq:energy modified}
\end{equation}
where $p_{ij}$ is from (\ref{eq:leverage probability}) and $C_{1}$
is a constant that depends on the matrix dimension and the rank structure.

It follows that at the region of high importance $\mathcal{J}$, there
is $\tilde{p}_{ij}\propto p_{ij}$, according to Theorem \ref{thm:thm 2}.
Therefore, $\tilde{p}_{ij}$ is essentially the leverage probability
in $\mathcal{J}$. In the pseudo images $\mathcal{I}$ where the source
signal almost vanishes, there is $\tilde{p}_{ij}\ll p_{ij}$, thus,
the sampling probability is significantly reduced, and therefore,
frequently sample at locations where the source signal vanishes can
be avoided.
\begin{figure*}[t]
\subfigure[]{\includegraphics[width=0.33\textwidth]{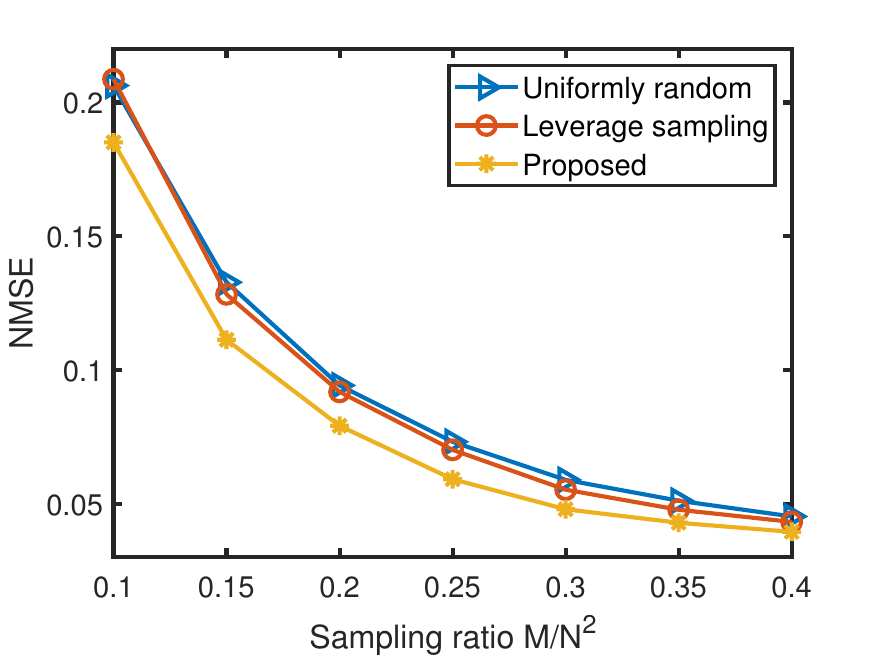}}\subfigure[]{\includegraphics[width=0.33\textwidth]{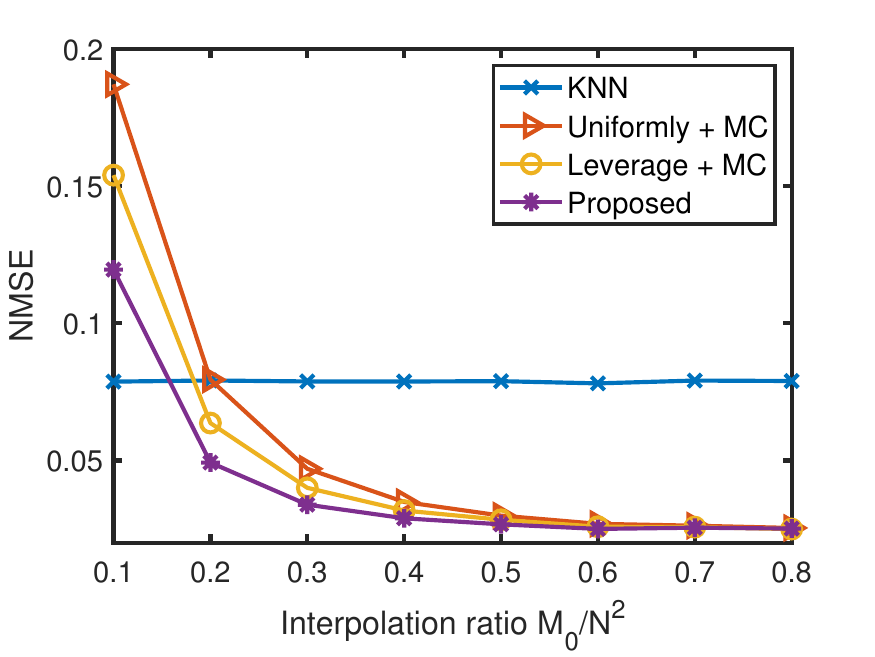}}\subfigure[]{\includegraphics[width=0.33\textwidth]{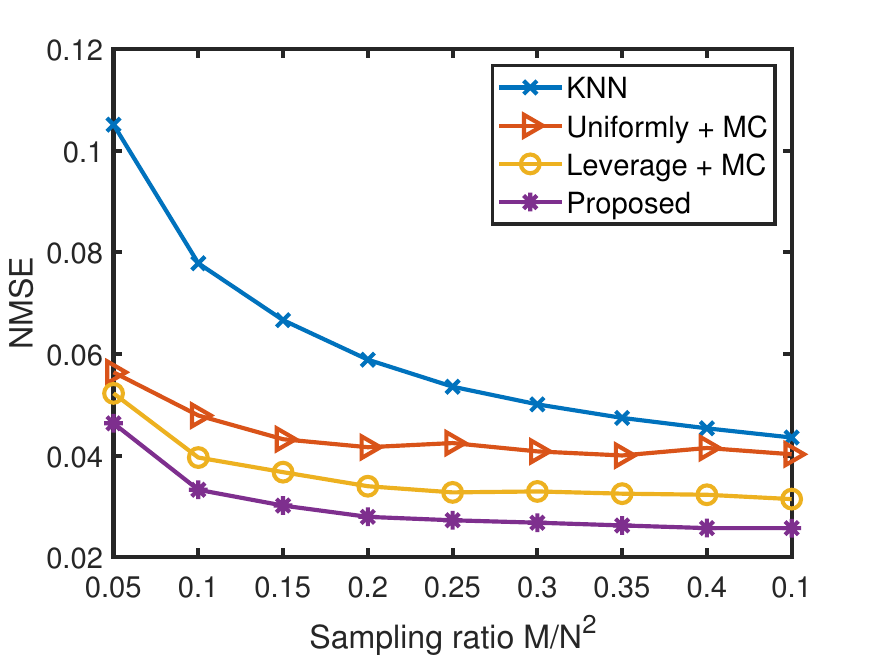}}\caption{\label{fig: matrix compleiton}(a) Construction \ac{nmse} of matrix
completion under different sampling strategy versus different sampling
ratio $M/N^{2}$. (b) Construction \ac{nmse} of interpolation-assisted
matrix completion under different interpolation radio $M_{0}/N^{2}$.
(c) Construction \ac{nmse} of interpolation-assisted matrix completion
under different sampling ratio $M/N^{2}$.}
\end{figure*}

\subsection{Implementation of Energy-modified Leverage Sampling\label{subsec:Implementation-of-Energy-modifie}}

Suppose that the expected number of measurements is $M$. First, we
interpolate a matrix $\hat{\bm{H}}$, using $\iota M$ measurements
uniformly random taken in the area of interest, for a small $\iota<1$,
as the prior information. Algorithms such as \ac{knn}, Kriging, or
regression can be used for the construction of $\hat{\bm{H}}$. Then,
we calculate the \ac{svd} of $\hat{\bm{H}}=\hat{\bm{U}}\hat{\bm{\Sigma}}\hat{\bm{V}}^{\text{T}}$
to obtain the leverage scores, $\hat{\mu}_{i}$ and $\hat{\nu}_{j}$
as in (\ref{eq:mu})\textendash (\ref{eq:nv}), for the $i$th row
and $j$th column, and we further obtain $\hat{p}_{ij}=\hat{\mu}_{i}+\hat{\nu}_{j}$.

Next, we establish the energy-modified leverage probability $\tilde{p}_{ij}=C_{1}\sqrt{\hat{H}_{ij}\hat{p}_{ij}}$,
where $C_{1}=(1-\iota)M(\sum_{i,j}\sqrt{\hat{H}_{ij}\hat{p}_{ij}})$.
Then, in the second round of sampling, we independently sample each
grid according to the probability $\tilde{p}_{ij}$. One can verify
that the expected number of samples equals to $(1-\iota)M$. Finally,
matrix completion is performed via nuclear norm minimization using
the total $M$ measurements obtained from the two rounds of sampling,
to obtain the reconstructed ratio map $\tilde{\bm{H}}$.

\section{Numerical Results\label{sec:Numerical-Results}}

We adopt model (\ref{eq:model-propagation-field}) to simulate the
radio map in an $L\times L$ area for $L=2$ kilometers, $K=3$ sources,
where $g_{k}(d)=Pd^{-1.5}A(f)^{-d}$, with parameter $A(f)=0.8$,
corresponding to an empirical energy field of underwater acoustic
signal at frequency $f=5$ kHz, $d=\sqrt{x^{2}+y^{2}+h^{2}}$ represents
the distance from the source, $(x,y)$ is the coordinate, and $h=400$
meters is the depth of interest \cite{BRELeoMak:B04}. The shadowing
component in log-scale $\mbox{log}_{10}\zeta$ is modeled using a
Gaussian process with zero mean and auto-correlation function $\mathbb{E}\{\mbox{log}_{10}\zeta(\bm{z}_{i})\mbox{log}_{10}\zeta(\bm{z}_{j})\}=\sigma_{\text{s}}^{2}\mbox{exp}(-||\bm{z}_{i}-\bm{z}_{j}||_{2}/d_{\text{c}})$,
in which $d_{\text{c}}=200$ meters, $\sigma_{s}^{2}=1$. We choose
$\epsilon\sim\mathcal{N}(0,\sigma)$ with $\sigma=0.1$ to model the
measurement noise.

The \ac{nmse} of the reconstructed radio map is employed for performance
evaluation, which is calculated through $||\tilde{\bm{H}}-\bm{H}||_{F}^{2}/||\bm{H}||_{F}^{2}$.
We evaluate the performance of radio map construction under matrix
dimension $N=100$.

We first test the energy-modified leverage sampling under the matrix
completion scheme in \cite{CaiCan:J10}, and choose $M$ measurements
with the sampling ratio $M/N^{2}=10\text{\%}-40\text{\%}$. We compare
three sampling schemes, the proposed energy-modified leverage sampling,
the uniformly random sampling, and the leverage sampling. For the
leverage sampling, we first uniformly random sample $\iota M$ grids
to obtain $p_{ij}$, then we sample $(1-\iota)M$ grids accordingly,
where we choose $\iota=0.7$. Fig.~\ref{fig: matrix compleiton}
(a) shows that the proposed energy-modified leverage sampling outperforms
the leverage sampling and uniformly random sampling in construction
NMSE with larger than $10$\% improvement under small sampling ratio.

We then test the effectiveness of the proposed energy-modified leverage
sampling in interpolation-assisted matrix completion \cite{SunChe:J22}.
In this method, we choose $\iota=1$ to estimate $\hat{H}_{ij}$ and
$\hat{p}_{ij}$, based on all the measurements $M$, then, we interpolate
$M_{0}$ grids based on the probability $\tilde{p}_{ij}$ with $C_{1}=M_{0}(\sum_{i,j}\sqrt{\hat{H}_{ij}\hat{p}_{ij}})$.
After that, we perform matrix completion. We compare this method to
the following baseline methods. Baseline 1: Uniformly random interpolation
followed by matrix completion (Uniformly + MC). In this approach,
we uniformly and randomly interpolate grids based on the $M$ measurements,
and then, use the \ac{svt} algorithm to solve the matrix completion
problem. Baseline 2: Leverage interpolation followed by matrix completion
(Leverage + MC). This approach interpolates each grids independently,
based on the leverage probability $p_{ij}$. Baseline 3: \ac{knn}
method, with $k=3$.

To show the influence of interpolation ratio $M_{0}/N^{2}$ on the
performance of the proposed method, we choose the number of measurements
$M=1000$ to satisfy the sampling ratio $M/N^{2}=10\text{\%}$ and
vary the interpolation ratio $M_{0}/N^{2}=10\text{\%}-80\text{\%}$.
Simulation results in Fig.~\ref{fig: matrix compleiton} (b) demonstrates
that under the same sampling ratio, the proposed energy-modified leverage
sampling significantly outperforms the baseline methods by more than
$20$\% under a low interpolation ratio. The proposed method only
needs a interpolation ratio of among $50$\% to attain the best performance.

To show the influence of the sampling ratio $M/N^{2}$ on the performance
of the proposed method, we choose the number of measurements $M$
to satisfy the sampling ratio $M/N^{2}$ ranging from $5\text{\%}-45\text{\%}$
and set the number of grids to be interpolated $M_{0}=3000$ with
$M_{0}/N^{2}=30\text{\%}$. Fig.~\ref{fig: matrix compleiton} (c)
illustrates that interpolation based on energy-modified leverage probability
exhibits an over $10$\% improvement of construction accuracy, as
compared to the baseline methods. It demonstrates superior performance
compared to the conventional matrix completion approach, which lacks
a designed interpolation pattern, effectively highlighting the effectiveness
of the proposed sampling strategy.

\section{Conclusion\label{sec:Conclusion}}

This paper proposed an energy-modified leverage sampling method. It
was theoretically shown that the leverage scores were not efficient
since the existence of pseudo images. In addition, the leverage probability
$p_{ij}$ was shown to be consistent with the RSS at the regions around
source locations. Then, an energy-modified leverage probability $\tilde{p}_{ij}$
was formulated. Simulation results demonstrated the proposed method
have over $10$\% improvement in construction \ac{nmse}. \vfill{}
\bibliographystyle{IEEEtran}
\bibliography{IEEEabrv,StringDefinitions,JCgroup,ChenBibCV}

\end{document}